\begin{document}
\title{Real-Time Systems Modeling and Analysis\thanks{University of Colorado Boulder}}
%
%
\author{Lakhan Shiva Kamireddy\orcidID{0000-0001-6007-5408}}
\authorrunning{LS. Kamireddy}
%
\institute{University of Colorado, Boulder CO 80302, USA\\
\email{\{lakhan.kamireddy\}@colorado.edu}}
\maketitle              
\begin{abstract}
This paper is a survey of extensions to finite automata theory to model real-time systems as well as systems exhibiting mixed discrete-continuous behavior. Real-time systems maintain a continuous and timely interaction with the environment, often adhering to some timing constraints. Therefore, the finite automata theory is extended to measure real-time values and accept or reject runs on a class of automata known as timed-automata, upon satisfying some timed properties. The automata modeling the mixed discrete-continuous behavior of hybrid systems has its continuous-time dynamics described using ordinary differential equations for the state space and discrete-time dynamics describing the control decisions. Based on these dynamical system models, we likewise extend the finite automata theory to describe the behavior of hybrid systems using Hybrid Automata. We further study some applications of this class of systems, sometimes referred to as Cyber-physical systems and perform a case-study on Peterson's Mutual Exclusion protocol using Uppaal.

\keywords{Real-time systems \and Timed Automata  \and Hybrid Automata \and Cyber-Physical Systems \and Uppaal.}
\end{abstract}
\section{Introduction}
\subsection{Real-time systems and automata based modeling}
Real-time systems maintain timely interaction with the environment. The timing constraints are crucial to such systems and it may lead to dramatic consequences when these constraints are not met. We therefore would like to model such systems using automata theory and model check the timed properties of real-time systems. However, we don't have the necessary resources to model time in finite automata. In 1994, Alur and Dill published the results of their study as the theory of timed automata [1], thereby solving this problem. Timed automata theory is an extension to finite automata theory that describes a way of measuring real-time event occurrences in such systems. It equips us with the required resources in verifying these real-time systems and obtaining provably correct guarantees of timed properties in them. We will study various properties of timed automata and prove some of them in section II. We will also look at a class of systems known as Hybrid Systems that have mixed discrete-continuous behavior. We will look at the hybrid automata theory for modeling this behavior in section III. In section IV we will look at some of the applications of systems known as Cyber-Physical systems that have these hybrid properties. In section V we look at a case-study analysis of the peterson's mutual exclusion protocol in a tool for modeling and verifying real-time systems, and discuss open problems and conclusions in section VI.

\subsubsection{Real-time systems} Real-time systems are encountered in many instances, and we interact with them more often than we realize. Some examples of real-time systems are event response systems like airbag systems, closed-loop control like cruise control system in a car, aircraft control systems, cardiac pacemakers. Before looking at automata for modeling real-time systems, let us first take a look at a class of words that have an embedded time component within them.

\subsubsection{Timed words} An alphabet $\Sigma$ is defined over a finite set of letters. A timed word is a tuple (w, t) where w is a word over the alphabet $\Sigma$, w=$a_1$$a_2$$a_3$...$a_n$ and t is a time sequence, t=$t_1$$t_2$$t_3$...$t_n$ where $t_1{\leq}t_2{\leq}t_3{\leq}$...$t_n$ and $t_i$ $\in$ $R_{\geq0}$, the set of non-negative real numbers. A timed language is a set of words in L such that $L$ $\in$ $T{\Sigma}^*$ is a property over timed words [2].

\subsubsection{Automaton modeling timed properties} Fig. 1. shows an automaton modeling a lamp. It also describes the timed properties of the lamp. The lamp starts in off state. When the switch is pressed in off state, it takes the transition to low state while resetting a clock, y to zero. From low state if the switch is pressed again before 5 seconds have elapsed, it transitions to bright state. From low state if the switch is pressed after 5 seconds have elapsed, it takes a transition back to the off state and accepts. From bright state, if the switch is pressed, it transitions to off state and accepts.

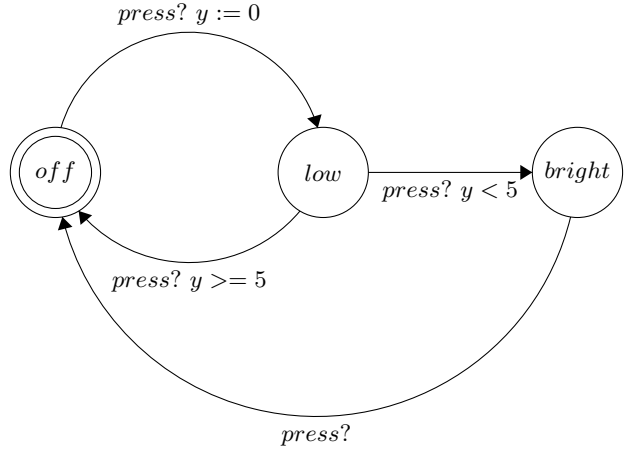
\begin{figure}
\begin{center}
\begin{tikzpicture}[scale=0.2]
\tikzstyle{every node}+=[inner sep=0pt]
\draw [black] (22.1,-26.6) circle (3);
\draw (22.1,-26.6) node {$off$};
\draw [black] (22.1,-26.6) circle (2.4);
\draw [black] (39.9,-26.6) circle (3);
\draw (39.9,-26.6) node {$low$};
\draw [black] (56.8,-26.6) circle (3);
\draw (56.8,-26.6) node {$bright$};
\draw [black] (22.463,-23.636) arc (-196.62661:-343.37339:8.91);
\fill [black] (39.54,-23.64) -- (39.79,-22.73) -- (38.83,-23.01);
\draw (31,-16.78) node [above] {$press?\mbox{ }y:=0$};
\draw [black] (38.36,-29.161) arc (-39.96724:-140.03276:9.604);
\fill [black] (23.64,-29.16) -- (23.77,-30.09) -- (24.54,-29.45);
\draw (31,-33.1) node [below] {$press?\mbox{ }y>=5$};
\draw [black] (56.344,-29.561) arc (-13.69418:-166.30582:17.388);
\fill [black] (22.56,-29.56) -- (22.26,-30.46) -- (23.23,-30.22);
\draw (39.45,-43.33) node [below] {$press?$};
\draw [black] (42.9,-26.6) -- (53.8,-26.6);
\fill [black] (53.8,-26.6) -- (53,-26.1) -- (53,-27.1);
\draw (48.35,-27.1) node [below] {$press?\mbox{ }y<5$};
\end{tikzpicture}
\caption{Automaton modeling timed properties of a lamp} \label{fig1}
\end{center}
\end{figure}

\subsection{Timed Automata} A timed automaton is composed of a finite automaton and a finite set of real-valued clocks. All the clocks values increase at the same rate. Guards can be placed on the transitions of the automaton using which we can enable or disable that transition, thereby constraining the behavior of the automaton and describing timed properties through the language of the automaton. Guards are comparisons of clock values with constants that are non-negative rational numbers and can either evaluate to true or false, g: $x{\leq}c$ $\mid$ $x{\geq}c$ $\mid$ ${\neg}g$ $\mid$ $g{\land}g$ where $x$ $\in$ $Clocks$, $c$ $\in$ $Q_{\geq0}$. The clocks can be reset.

\subsubsection{Formal defn.} A timed automaton M is defined as a tuple, M = (Q, $\Sigma$, C, Inv, $\delta$, $q_0$, F). Q is the finite set of states. $\Sigma$ is a finite set of actions. C is a finite set of clocks. Inv associates each location with an invariant. $\delta$ is a set of transitions. (q,a,g,r,q') is a transition from \textit{q} to \textit{q'}, executing an action \textit{a}, satisfying a guard \textit{g} and clock resets in \textit{r}. There are two types of transitions namely a location switch and a time switch. By elapsing time, and satisfying the location invariant, the automaton can stay in the same state thus making a time switch. By satisfying the guard and taking a transition to another location, the automaton makes a location switch.

\section{Timed Automata properties}
\subsection{Non-deterministic timed automata}
There can be non-determinism concerning location and non-determinism concerning time. $\epsilon$ transitions may also introduce non-determinism w.r.t location. Fig. 2. shows an example of location non-determinism. Upon reading a \textit{b} from the state $s_1$ we cannot determine deterministically if the automaton is going to stay in the state $s_1$ or if it is going to transition to the state $s_2$. Fig. 3. shows an example of time non-determinism. Upon reading an \textit{a} from the state $s_1$ we cannot determine deterministically the time of occurrence of the event (reading an a) as long as it satisfies the guard condition t $\leq$ 4 and location invariant t $\leq$ 5.

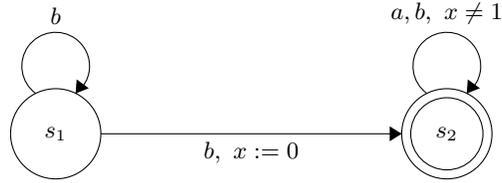
\begin{figure}
\begin{center}
\begin{tikzpicture}[scale=0.2]
\tikzstyle{every node}+=[inner sep=0pt]
\draw [black] (28.5,-25.4) circle (3);
\draw (28.5,-25.4) node {$s_1$};
\draw [black] (54.5,-25.4) circle (3);
\draw (54.5,-25.4) node {$s_2$};
\draw [black] (54.5,-25.4) circle (2.4);
\draw [black] (31.5,-25.4) -- (51.5,-25.4);
\fill [black] (51.5,-25.4) -- (50.7,-24.9) -- (50.7,-25.9);
\draw (41.5,-25.9) node [below] {$b,\mbox{ }x:=0$};
\draw [black] (27.177,-22.72) arc (234:-54:2.25);
\draw (28.5,-18.15) node [above] {$b$};
\fill [black] (29.82,-22.72) -- (30.7,-22.37) -- (29.89,-21.78);
\draw [black] (53.177,-22.72) arc (234:-54:2.25);
\draw (54.5,-18.15) node [above] {$a,b,\mbox{ }x\neq1$};
\fill [black] (55.82,-22.72) -- (56.7,-22.37) -- (55.89,-21.78);
\end{tikzpicture}
\caption{Non-deterministic Timed Automaton (location non-determinism)} \label{fig2}
\end{center}
\end{figure}

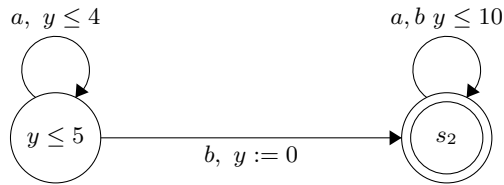
\begin{figure}
\begin{center}
\begin{tikzpicture}[scale=0.2]
\tikzstyle{every node}+=[inner sep=0pt]
\draw [black] (28.5,-25.4) circle (3);
\draw (28.5,-25.4) node {$y\leq5$};
\draw [black] (54.5,-25.4) circle (3);
\draw (54.5,-25.4) node {$s_2$};
\draw [black] (54.5,-25.4) circle (2.4);
\draw [black] (31.5,-25.4) -- (51.5,-25.4);
\fill [black] (51.5,-25.4) -- (50.7,-24.9) -- (50.7,-25.9);
\draw (41.5,-25.9) node [below] {$b,\mbox{ }y:=0$};
\draw [black] (27.177,-22.72) arc (234:-54:2.25);
\draw (28.5,-18.15) node [above] {$a,\mbox{ }y\leq4$};
\fill [black] (29.82,-22.72) -- (30.7,-22.37) -- (29.89,-21.78);
\draw [black] (53.177,-22.72) arc (234:-54:2.25);
\draw (54.5,-18.15) node [above] {$a,b\mbox{ }y\leq10$};
\fill [black] (55.82,-22.72) -- (56.7,-22.37) -- (55.89,-21.78);
\end{tikzpicture}
\caption{Non-deterministic Timed Automaton (time non-determinism)} \label{fig3}
\end{center}
\end{figure}

\subsection{Deterministic timed automata}
A timed automaton is deterministic if a) It has only one initial location, b) It doesn't have $\epsilon$ transitions, c) Event determinism: Two edges with same source and same label have disjoint guards ($g_1$ $\cap$ $g_2$ = $\emptyset$), d) Time determinism: For every transition, the intersection of g with $I_q$ is at most a singleton.

\subsection{Expressiveness of $\epsilon$ transitions}
$\epsilon$ transitions add to the expressiveness of a timed automaton, i.e. the language recognized by a timed automaton with $\epsilon$ transitions may not be recognized by a timed automaton without the $\epsilon$ transitions. Take the automaton in Fig. 4. as an example. This automaton accepts timed words over \textit{a} such that every occurrence time is an integer and no two \textit{a} events occur at the same time. This language cannot be accepted by a timed automaton if $\epsilon$ switches are not allowed. If the largest constant in such timed automaton is \textit{c}, then a timed automaton without $\epsilon$ transitions cannot distinguish between the words (a, c+1) and (a, c+1.1). In this case, it can't distinguish between (a, 2) and (a, 2.1).

\begin{figure}
\begin{center}
\begin{tikzpicture}[scale=0.2]
\tikzstyle{every node}+=[inner sep=0pt]
\draw [black] (37.6,-32.3) circle (3);
\draw [black] (37.6,-32.3) circle (2.4);
\draw [black] (36.277,-29.62) arc (234:-54:2.25);
\draw (37.6,-25.05) node [above] {$a,\mbox{ }x=1,\mbox{ }x:=0$};
\fill [black] (38.92,-29.62) -- (39.8,-29.27) -- (38.99,-28.68);
\draw [black] (38.923,-34.98) arc (54:-234:2.25);
\draw (37.6,-39.55) node [below] {$\epsilon,\mbox{ }x=1,\mbox{ }x:=0$};
\fill [black] (36.28,-34.98) -- (35.4,-35.33) -- (36.21,-35.92);
\end{tikzpicture}
\caption{Expressiveness of $\epsilon$ transitions} \label{fig4}
\end{center}
\end{figure}
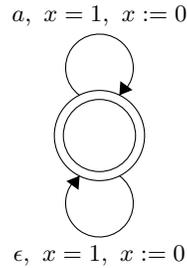

\subsection{Timed regular languages}
A timed language is called timed regular if it can be accepted by a timed automaton.

\subsection{Closure properties}
\begin{theorem}Timed regular languages are closed under the following operations.\end{theorem}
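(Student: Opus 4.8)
The plan is to treat the two classical closure operations separately, \emph{union} and \emph{intersection}, since each admits a direct construction from the formal definition $M = (Q, \Sigma, C, \mathrm{Inv}, \delta, q_0, F)$ given above. In both cases I would start from two timed automata $M_1$ and $M_2$ recognizing timed regular languages $L_1$ and $L_2$, and I would first rename clocks so that $C_1 \cap C_2 = \emptyset$. This relabeling does not change the accepted language, and it lets me run both clock sets in parallel without interference, since all clocks advance at the same rate by definition.

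For union I would build a single automaton whose state set is the disjoint union $Q_1 \uplus Q_2$, exploiting the nondeterminism already permitted by timed automata. Because the definition fixes a single designated $q_0$, I would add a fresh initial location connected to the old initial locations of $M_1$ and $M_2$ by $\epsilon$-transitions with trivially true guards and no resets; the expressiveness of $\epsilon$-transitions was established in the preceding subsection. A run then nondeterministically commits to one of the two subautomata at time zero, so the combined automaton accepts exactly $L_1 \cup L_2$, with accepting set $F_1 \cup F_2$ and invariants inherited unchanged.

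For intersection I would use the standard \emph{product construction}. The product automaton has locations $Q_1 \times Q_2$, clock set $C_1 \cup C_2$, and for every pair of transitions $(q_1, a, g_1, r_1, q_1')$ in $\delta_1$ and $(q_2, a, g_2, r_2, q_2')$ in $\delta_2$ \emph{on the same action} $a$, a synchronized transition $((q_1,q_2), a, g_1 \wedge g_2, r_1 \cup r_2, (q_1',q_2'))$; the invariant of a product location is the conjunction $\mathrm{Inv}_1(q_1) \wedge \mathrm{Inv}_2(q_2)$ and the accepting set is $F_1 \times F_2$. I would then argue by induction on the length of a timed word that a run of the product exists if and only if compatible runs of $M_1$ and $M_2$ exist over the \emph{same} timed word, yielding acceptance exactly when the word lies in $L_1 \cap L_2$.

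The main obstacle I expect is not the construction itself but the verification that time elapses consistently across the two components in the intersection case. Because guards compare clocks to rational constants and every clock grows at the same rate, the shared global elapsed time forces the two copies of the clock valuations to stay synchronized; I would state this invariant precisely and check that conjoining guards and invariants neither blocks a legal joint delay nor permits an illegal one. It is worth noting that such a list deliberately omits complementation: unlike the untimed setting, nondeterministic timed automata are \emph{not} closed under complement, so no analogous determinize-and-swap argument is available.
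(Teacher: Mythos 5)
Your constructions for union and intersection are essentially sound, and your intersection argument (product locations $Q_1 \times Q_2$, disjoint clock sets, conjoined guards and invariants, unioned resets, accepting set $F_1 \times F_2$) is exactly the product construction the paper invokes. But the theorem covers \emph{four} closure operations, not two: union, intersection, \emph{projection}, and \emph{untime}. Your proof is silent on the last two, which is a genuine gap. The paper disposes of them briefly: closure under projection is obtained by relabeling the projected-away transitions with $\epsilon$, and the claim that \emph{untime}$(L)$ is $\omega$-regular when $L$ is timed regular is established by the region construction of Alur and Dill [1] --- the finite region graph obtained by quotienting clock valuations under region equivalence yields an untimed automaton recognizing \emph{untime}$(L)$. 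Neither of these follows from, or resembles, your product-style arguments, so they cannot be treated as routine corollaries of what you wrote.

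There is also a subtle flaw in your union construction. You connect a fresh initial location to the old initial locations by $\epsilon$-transitions and cite the subsection on expressiveness of $\epsilon$ transitions as if it licensed this move; it actually cuts the other way. That subsection shows that $\epsilon$-transitions \emph{strictly increase} the expressive power of timed automata, so exhibiting an automaton \emph{with} $\epsilon$-transitions that accepts $L_1 \cup L_2$ does not by itself show that $L_1 \cup L_2$ is timed regular in the sense of the paper's definition, where transitions carry actions from $\Sigma$. You must either argue that these particular $\epsilon$-transitions are removable --- they are, since they occur at time zero with no guards and no resets: simply give the fresh initial location a copy of every outgoing transition of each of the two original initial locations --- or avoid them from the start. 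With that repair, and with proofs supplied for projection and untime, your argument would cover the full theorem.
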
 Let us consider two timed regular languages, $L_1$, $L_2$.
\begin{enumerate}
   \item \textit{Union}: Union of two timed regular languages, ${L_1}\cup{L_2}$ is timed regular.
   \item \textit{Intersection}: Intersection of them, ${L_1}\cap{L_2}$ is timed regular.
   \item \textit{Projection}: Projection of a timed regular language is timed regular.
   \item \textit{Untime}: If L is timed regular, then untime(L) is $\omega$-regular [3].
\end{enumerate}

Closure under Union and Intersection are established by constructing product of the timed automata. Projection can be proved by labeling transitions with $\epsilon$. The proof for untime(L) being $\omega$-regular is established by region construction as shown in [1].

\subsection{Closure under complementation}
\begin{theorem}Timed regular languages are not closed under complementation [3].\end{theorem}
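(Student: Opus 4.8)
The plan is to establish non-closure by exhibiting a single concrete witness: a timed language $L$ that is timed regular but whose complement $\overline{L}$ is \emph{not} timed regular. This suffices, for if the class of timed regular languages were closed under complementation, then $\overline{L}$ would itself have to be timed regular, contradicting its choice. As the witness I would take the single-letter language
\[
L = \{(w,t) : w = a^{n},\ \exists\, i<j \text{ with } t_j - t_i = 1\},
\]
the timed words over $\{a\}$ in which some two events are separated by exactly one time unit.

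First I would show $L$ is timed regular by building an accepting automaton with a single clock $x$. The automaton idles in its initial location under a self-loop on $a$, nondeterministically guesses one event to be the first of the distinguished pair and on that transition resets $x := 0$, and then moves to an accepting location on any later event whose guard is $x = 1$. Since only the existence of one matching pair is required, a single clock together with the nondeterministic guess suffices, and this construction is routine.

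The substance of the proof is to show that $\overline{L}$ — the timed words over $\{a\}$ in which no two events are exactly one time unit apart — is not timed regular. I would argue by contradiction: assume a timed automaton $B$ with clock set $C$, $|C| = k$, recognizes $\overline{L}$, and run an information-theoretic counting argument. Consider prefixes placing $n$ events in the open interval $(0,1)$ at times $t_1 < \cdots < t_n$; among the one-event continuations in $(1,2)$, the words $B$ must reject are precisely those placing an event at one of the $n$ forbidden points $t_1 + 1, \ldots, t_n + 1$. To reject a continuation at $t_i + 1$, some clock of $B$ must read an integer there, yet over the whole interval $(1,2)$ — during which there are no resets before the single continuation event — each of the $k$ clocks, having slope one and increasing by less than one, crosses an integer constant at most once. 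Hence along any fixed run the continuation times in $(1,2)$ that $B$ can treat as forbidden number only $O(k)$, a bound independent of $n$. Taking $n$ larger than this bound should force $B$ either to wrongly reject an admissible continuation or to wrongly accept a forbidden one.

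The main obstacle is making this counting fully rigorous in the presence of nondeterminism: distinct runs of $B$ reach distinct configurations at time $1$, each inducing its own partition of $(1,2)$ into finitely many guard-invariant subintervals, and acceptance of a continuation is the disjunction over all runs. I would discharge this by passing to the region automaton of Alur and Dill, whose finitely many clock regions record only the integer parts of the clocks and the ordering of their fractional parts, and by showing that $k$ clocks generate too few regions to separate the $n$ distinct fractional offsets $\{t_i \bmod 1\}$ once $n > k$. Care is needed here because region equivalence is only a time-abstract bisimulation, so the two confusable prefixes must be chosen to agree on the exact continuation that forces acceptance, not merely on the untimed future; this is exactly the point where the region construction does the work. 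Once the contradiction is obtained, $\overline{L}$ is not timed regular, and since $L$ is, the timed regular languages are not closed under complementation.
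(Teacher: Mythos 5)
Your choice of witness is fine ($L$ is timed regular and its complement is indeed not timed regular), but the counting argument you run on single-event continuations cannot be repaired, because the language it would have to refute is itself timed regular. Restrict $\overline{L}$ to words with $n$ events at times $t_1 < \cdots < t_n$ in $(0,1)$ followed by exactly one event at $s \in (1,2)$: a nondeterministic automaton with three clocks accepts precisely the good words of this shape by guessing the index $i$ such that $t_i + 1 < s < t_{i+1} + 1$, resetting a clock $x$ at event $i$ and a clock $y$ at event $i+1$, and checking $x > 1 \wedge y < 1$ on the final event (a never-reset clock enforces the shape; the boundary guesses $s < t_1 + 1$ and $s > t_n + 1$ are handled similarly). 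This is exactly the loophole you half-identify: your $O(k)$ bound on critical points holds per run, but acceptance is a disjunction over runs, and the number of useful runs grows with $n$ --- each of the $n+1$ gaps between forbidden points gets its own run. Since this restricted language is timed regular, no pigeonhole over such words, however phrased, can yield a contradiction. Your proposed fix via the region automaton does not close the gap either: region equivalence is time-abstract, the distinguishing continuations differ only in exact real-valued timing, and the untimed continuation language of every prefix is the same (namely $a^*$), so counting regions against fractional offsets separates nothing.

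A correct direct proof must force a \emph{single} run to certify all $n$ forbidden points simultaneously, which requires continuations with many events: take $n = k+1$ events in $(0,1)$ and $n$ further events $u_j = t_j + 1 + \epsilon_j$ in $(1,2)$, with $\epsilon_j$ small enough that no $u_j$ equals any $t_i + 1$. In an accepting run, at the first suffix event each clock's last reset time lies in $\{0, t_1, \dots, t_n\}$, so some $t_{i^*}$ is remembered by no clock anywhere in the suffix; perturbing $t_{i^*}$ to $u_{i^*} - 1$ preserves every guard along that run (prefix clock differences stay strictly inside $(0,1)$, suffix guards never mention $t_{i^*}$), so the perturbed word is still accepted --- yet it contains two events at distance exactly $1$, a contradiction. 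The paper avoids this delicacy entirely by an indirect route: it uses the two-letter witness (some $a$ at time $t$ with no event at $t+1$), intersects the assumed-regular complement with the timed regular language of $a^*b^*$-shaped words whose $a$'s occur at distinct times before time $1$, and invokes closure under intersection together with the untime theorem; the untimed intersection is $\{a^m b^n : n \geq m\}$, which is not regular. If you want a short fully rigorous proof, the paper's reduction is the easier one to carry out; if you want to keep your one-letter witness, you need the multi-event perturbation argument sketched above, not a region count.
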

\begin{proof}
Let $\Sigma$ = \{a, b\}. L is a timed language consisting of timed words \textit{w}, containing an $\textit{a}$ event at some time $\textit{t}$ such that no event occurs at time $\textit{t + 1}$. \textit{L} is accepted by the timed automaton in Fig. 5. We will show that $\overline{L}$ is not timed regular. \textit{untime(L)} accepts \textit{(a+b)*a(a+b)*}.

\begin{figure}
\begin{center}
\begin{tikzpicture}[scale=0.2]
\tikzstyle{every node}+=[inner sep=0pt]
\draw [black] (28.5,-25.4) circle (3);
\draw (28.5,-25.4) node {$s_1$};
\draw [black] (54.5,-25.4) circle (3);
\draw (54.5,-25.4) node {$s_2$};
\draw [black] (54.5,-25.4) circle (2.4);
\draw [black] (31.5,-25.4) -- (51.5,-25.4);
\fill [black] (51.5,-25.4) -- (50.7,-24.9) -- (50.7,-25.9);
\draw (41.5,-25.9) node [below] {$a,\mbox{ }x:=0$};
\draw [black] (27.177,-22.72) arc (234:-54:2.25);
\draw (28.5,-18.15) node [above] {$a,b$};
\fill [black] (29.82,-22.72) -- (30.7,-22.37) -- (29.89,-21.78);
\draw [black] (53.177,-22.72) arc (234:-54:2.25);
\draw (54.5,-18.15) node [above] {$a,b\mbox{ }x\neq1$};
\fill [black] (55.82,-22.72) -- (56.7,-22.37) -- (55.89,-21.78);
\end{tikzpicture}
\caption{Timed Automaton for disproving closure on complementation} \label{fig6}
\end{center}
\end{figure}
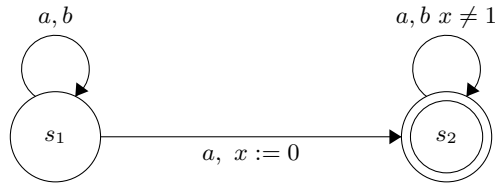

Consider timed regular language L', consisting of timed words \textit{w}, such that untimed word of \textit{w} is in \textit{a*b*}. All \textit{a} events happen before time 1 and no two \textit{a} events happen at same time. Observe the word $a^{m}b^{n}$ $\in$ untime($\overline{L}{\cap}L'$) iff m $\geq$ n. Timed regular languages are closed under intersection, \textit{untime(L)} of the timed regular language \textit{L} is regular, but since we also know that $a^{m}b^{n}$ m $\geq$ n is not regular, we proved that $\overline{L}$ is not timed regular. \end{proof}

\subsection{Emptiness}
Given a timed automaton over finite words, is the language accepted by it empty? This problem is known as the \textbf{emptiness} problem for timed automata. It can be analyzed by checking if there is a run of the automaton from an initial state to a final state. 
\begin{proof}
We have a problem in performing this check since the number of configurations is uncountably infinite due to real-valued clock times. The solution to this problem was proposed in [1] by constructing a finite region graph using region equivalence technique. The caveat here is that although there are infinite configurations, some clock valuations are equivalent and hence a timed automaton cannot distinguish between them. By grouping such configurations using region equivalence technique, we construct a finite region graph. The problem thus reduces to a reachability problem over this finite region graph, which is decidable. Hence it is proved that emptiness problem for timed automata is decidable.
\end{proof}

\subsection{Universality}
Given a timed automaton over finite words, is the language accepted by it the set consisting of every timed word? This problem is known as the \textbf{universality} problem for timed automata. This problem can be proved undecidable by reducing the halting problem over two-counter machines, which is known to be undecidable to the universality problem [3].

\subsection{Language inclusion}
Given two timed automata \textit{A} and \textit{B} over finite words, checking if L(A) $\subseteq$ L(B) is known as the \textbf{language inclusion} problem. This problem can be proved undecidable by reducing the halting problem over two-counter machines, which is known to be undecidable to the language inclusion problem [3].

\subsection{Decidability in special cases}
Let us consider the universality problem on a given timed automaton having at most one clock. In this special case, this problem has been proved to be decidable by Abdulla et al. in [4]. Given two timed automata A and B, such that the timed automaton B only has one clock, and the only constant appearing in the clock constraints of B is 0, the language inclusion problem on A and B, formulated as checking if L(A) $\subseteq$ L(B) has been proved to be decidable by Ouaknine et. al. in [5].

\subsection{Decidability of determinizability}
In [6], E. Asarin has posed an open question concerning timed automata as follows. \textit{Given a timed automaton A, is it possible to decide whether it is equivalent to a deterministic one?} This question remained open until O. Finkel has proved in [7] that this problem is undecidable.

\section{Hybrid Systems}
\subsection{Background}
Hybrid Systems are dynamical systems with interacting continuous-time dynamics and discrete-time dynamics. The evolution of the state of a continuous time system is described by an \textbf{ordinary differential equation} (ODE), $\dot{x} = Ax$, whereas the evolution of the state of a discrete-time system is described by a \textbf{difference equation}, $x_{k+1}$ = A$x_k$. The continuous time dynamics are referred to as flows and the discrete-time dynamics are referred to as jumps. The flows cause the system's state to make a smooth continuous transition and jumps cause the system to transition to a different set of flow equations by making a discrete jump to the new continuous-time dynamics. The hybrid behavior arrives in various contexts, a) Continuous systems with a phased operation, like the bouncing ball, biological cell growth to name a few, b) Continuous systems controlled by discrete logic, like control modes for complex systems, c) Coordinating processes, like multi-agent systems. When formally modeling the hybrid system behavior, the modeling paradigm we choose needs to have both continuous and discrete parts to it [8].

\subsection{Bouncing ball}
Let us consider an example of a ball of mass \textit{m} freely falling from height \textit{x} $\geq$ 0 that bounces after hitting the ground. The analysis of this system has two parts to it.
\subsubsection{Part-I Free Fall}
In free fall while x $\geq$ 0, the ball is under the influence of gravity. The equation of motion satisfies $\ddot{x}$ = -g, where \textit{x} is the position of the ball and \textit{g} is the gravitational constant. We reduce the order of the ODE by substituting $\dot{x}$ = v, where \textit{v} is the velocity of the ball. We have $\dot{v}$ = -g. This describes the continuous time dynamics of the system.
\subsubsection{Part-II Bouncing}
When the ball is at x = 0, and has a velocity downwards ($v<0$), the ball bounces on the ground. There is a loss of velocity due to deformation and friction. The velocity discretely jumps to a different value as per the equation v := -cv, where $c<1$. The velocity's magnitude is reduced and it's direction flips.
\subsubsection{Hybrid Automaton}
The hybrid automaton for bouncing ball is illustrated in Fig. 6. $x_1$ represents the vertical position and $x_2$ represents the velocity of the ball [9]. In Fig. 7 we illustrate the bouncing ball simulation in MATLAB, where the yellow legend corresponds to position of the ball, and blue legend corresponds to velocity of the ball.

\begin{figure}
\begin{minipage}{.5\textwidth}
\begin{center}
\begin{tikzpicture}[scale=0.175]
\tikzstyle{every node}+=[inner sep=0pt]
\draw [black] (38.2,-33.4) circle (5.9);
\draw (38.2,-31.4) node {$\dot{x_1}\mbox{ }=\mbox{ }x_2$};
\draw (38.2,-34.4) node {$\dot{x_2}\mbox{ }=\mbox{ }-g$};
\draw (38.2,-36.4) node {$m$};
\draw [black] (32.877,-30.72) arc (259:-12:4.25);
\draw (32.89,-17.96) node [above] {$x_1=0 \wedge x_2\leq0$};
\draw (32.89,-19.96) node [above] {impact};
\draw (32.89,-21.96) node [above] {${x_1}' = {x_2} \wedge {x_2}' = -c{x_2}$};
\draw (35.89,-40.96) node [below] {${x_1} \geq 0$};
\fill [black] (37.88,-27.66) -- (38.38,-26.45) -- (37.38,-26.46);
\end{tikzpicture}
\caption{Hybrid automaton} \label{fig6}
\end{center}
\end{minipage}
\begin{minipage}{.5\textwidth}
\begin{center}
\includegraphics[width=6.3cm,height=4.9cm,keepaspectratio]{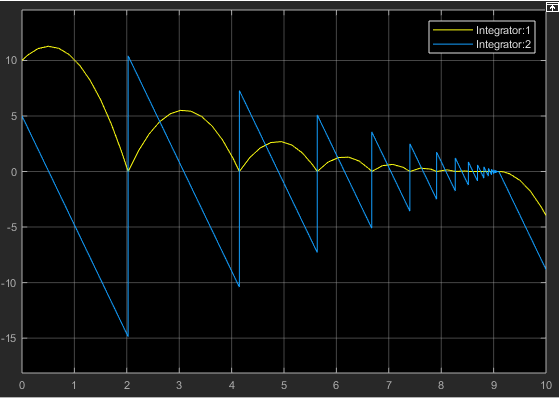}
\caption{Bouncing ball simulation} \label{fig7}
\end{center}
\end{minipage}
\end{figure}

\subsubsection{Formal defn.} A hybrid automaton H is defined as a tuple, H = (M, $\Sigma$, $M_0$, X, $\delta$, I, F, J). M is the finite set of control modes, also known as locations. $\Sigma$ is a finite set of actions. Actions are given as sets of differential equations. $M_0$ is a finite initial set of control modes, ${M_0}$ ${\subseteq}$ $M$. X is a finite set of real-valued variables. $\delta$ is a set of transitions. I is the mode-invariant function. F is the mode dependent flow function, characterizing flow at each mode. J is the jump function. The configuration of a hybrid automaton is $(m,v)$ where $m$ ${\in}$ $M$  is a mode/location and v is a variable valuation.

\subsubsection{Reachability}
Linear Hybrid Automata are a class of hybrid automata where its activities, invariants, and transition relations are defined by linear expressions. Given a linear hybrid automaton, is there a run reaching a particular state from the initial state? This problem is known as the \textbf{reachability} problem for linear hybrid automata. In [10], Alur et al. have reduced the halting problem of two-counter machines, which is known to be undecidable to the reachability problem for linear hybrid automata. Hence, it is proved to be undecidable.\\
Further, in [11], S.N. Krishna et al. have reduced the halting problem for two-counter machines to the reachability problem on recursive hybrid automata, thus showing that it is undecidable. However, bounded reachability is still decidable. There exist some incomplete algorithms for reachability. In [12], E. Abraham presents one such incomplete algorithm for linear hybrid automata based on fixed-point computation. Termination of this algorithm corresponds to finding the least fixed-point for the one-step forward reachability starting from the initial set.

\section{Cyber-Physical Systems}
\subsection{Background}
As defined in [13], a cyber-physical system consists of a collection of computing devices communicating with one another and interacting with the physical world through sensors and actuators in a feedback loop. There are quite a few applications of such systems everywhere, whether it is smart buildings or medical devices or even automobiles. One of the distinguishing characteristics of cyber-physical systems is that they are reactive in nature, meaning these systems interact with the environment in an ongoing manner through inputs and outputs, like a program for a cruise controller in a car just to name one. They are concurrent in nature, meaning as opposed to sequential computation, there are multiple threads known as processes executing concurrently. They have mixed discrete-continuous behavior such as the hybrid systems. They often have real-time system characteristics and are in the most safety-critical areas where errors could lead to catastrophic eventualities. We will study some crucial applications of cyber-physical systems and look at automata based modeling and analysis of such systems.

\subsection{Cardiac Pacemaker}
Implantable cardiac pacemakers are life-saving devices that help patients with heart arrhythmia conditions. Ironically, there are bugs even in such safety-critical devices. To guarantee the correct operation of such devices, we need a model of the heart that captures the physiological conditions of the heart and respond to pacemaker outputs, using which we can verify the safety properties of the pacemaker. The pacemaker functions autonomously according to the device algorithm that is implemented onto it, interacts with its environment through sensors, actuators and other devices. It is, therefore a perfect example of a cyber-physical system as defined in the previous section. Timed automata is an appropriate formalism for such a heart model since most timing behaviors of heart can be captured by timed automata. In [14] Z. Jiang et. al. have proposed a real-time heart model based on timed automata and capture such crucial timing properties of the heart to then verify the cardiac pacemaker.

\section{Uppaal case study}
\subsection{Background}
Uppaal is a model checker for real-time systems. We can conveniently model a real-time system as a network of timed automata running concurrently. Uppaal is used to verify system properties specified in CTL over the timed automata system model [15]. The CTL formulae can be specified over paths, and are classified broadly as \textit{safety}, \textit{liveness} and \textit{reachability} properties. Uppaal runs on a client-server architecture and is split into the backend model checking engine and frontend GUI communicating via TCP/IP protocol. It has a simulator where the user can run the system manually, and a verifier where the user may specify the properties and run the model checker on the system.

\subsection{Peterson's mutual exclusion}
The idea of mutual exclusion is that two processes which have critical sections cannot enter those sections at the same time. It is a technique for avoiding race conditions and keeping the result deterministic while running the processes concurrently. The Peterson's mutual exclusion algorithm is illustrated below.
\begin{algorithm}
\caption{Peterson's Mutex}\label{mutex}
\begin{minipage}{.5\textwidth}
\begin{algorithmic}[1]
\Procedure{Process 1}{}\\
req1 = 1;\\
turn = 2;\\
while(turn!=1 \&\& req2!=0);\\
\textit{//critical section}\\
job1();\\
req1 = 0;
\EndProcedure
\end{algorithmic}
\end{minipage}
\begin{minipage}{.5\textwidth}
\begin{algorithmic}[1]
\Procedure{Process 2}{}\\
req2 = 1;\\
turn = 1;\\
while(turn!=2 \&\& req1!=0);\\
\textit{//critical section}\\
job2();\\
req2 = 0;
\EndProcedure
\end{algorithmic}
\end{minipage}
\end{algorithm}

Fig. 8. illustrates the results of Uppaal model checker on the correctly implemented mutex algorithm as above. First property species that there exists a run where critical section is reachable as $E<> (P1.CS)$. Second property captures the mutual exclusion property, where on all paths both P1 and P2 cannot be in the critical section at the same time as A$[]$ not (P1.CS and P2.CS). We see that both the properties are satisfied in this case.

\begin{figure}
\begin{center}
\includegraphics[scale=0.35]{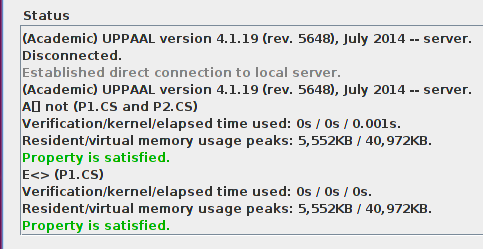}
\caption{Model checking mutex} \label{fig8}
\end{center}
\end{figure}

Fig. 9. illustrates incorrectly implemented mutex, where two processes are seen to be in the \textit{criticial section} at the same time in the simulator. Fig. 10 illustrates the results where Uppaal says that mutual exclusion property is not satisfied on the incorrectly implemented mutex. Mutual exclusion property has failed to satisfy the model.

\begin{figure}
\begin{minipage}{.5\textwidth}
\begin{center}
\includegraphics[scale=0.25]{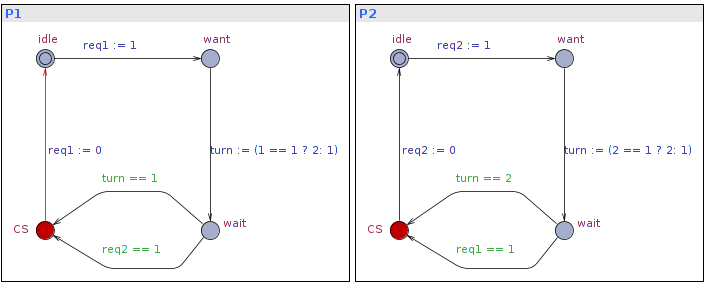}
\caption{Mutex protocol with bug} \label{fig9}
\end{center}
\end{minipage}
\begin{minipage}{.5\textwidth}
\begin{center}
\includegraphics[scale=0.31]{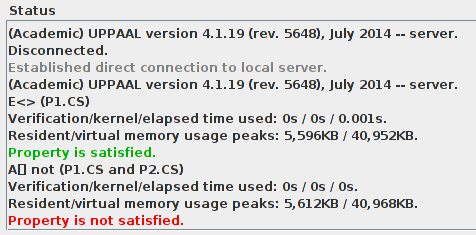}
\caption{Model checking mutex with bug} \label{fig10}
\end{center}
\end{minipage}
\end{figure}

\section{Conclusion}
The motive of this work is to study extensions to finite automata theory used to model systems of the 21st century that are becoming increasingly complex to verify, due to the confluence of sensors, actuators, real-time constraints, and their concurrent and reactive nature. The automata theory based modeling paradigms discussed above are helping us design computer-based models of such systems that can be used to analyze their behavior mathematically and ultimately build safety-critical systems that are more dependable and secure although complex.\\

As we use model checking to verify the properties of such systems, we also see that some seemingly simple problems like reachability don't have a decidable algorithm that works for any general system. Researchers in this area are working towards devising algorithms to model check these systems. These algorithms although while being incomplete algorithms, terminate in most cases that are very relevant to the system analysis. Future work also lies ahead in researching methods to synthesize correct-by-construct designs for controllers and other components involved in such real-time hybrid systems. Some open problems in hybrid systems are concerning non-linear hybrid systems. Scalability is a challenge for most of the modeling tools available to us currently. Furthermore, high-dimensional systems pose a unique challenge to reachability analysis in these tools. Recent works such as [16], show promise in this direction. Schupp et al. [17] present an analysis of the current challenges in verification of hybrid systems.

%
%
%

\begin{thebibliography}{8}

\bibitem{ref_alur_dill}
Alur, Rajeev and Dill, David L., \emph{A Theory of Timed Automata} 1994, Theor. Comput. Sci., vol. 126, pp. 183--235.
Elsevier. \doi{10.1016/0304-3975(94)90010-8}.

\bibitem{ref_srivathsan}
Srivathsan, B., \emph{Automata for Real-time Systems} 2013, CMI.

\bibitem{ref_alur_madhusudan}
Alur, Rajeev and Madhusudan, P., \emph{Decision Problems for Timed Automata: A Survey} 2004, Lecture Notes in Computer Science:
Formal Methods for the Design of Real-Time Systems 3185, pp. 1--24. Springer.

\bibitem{ref_abdulla_worrell}
Abdulla, P.A., Deneux, J., Ouaknine, J., Quaas, K., and Worrell, J., 2008. \emph{Universality Analysis for One-Clock Timed Automata}, Inf. 89, 4 (2008), pp. 419--450. Fundam.

\bibitem{ref_ouaknine_worrell}
Ouaknine, J. and Worrell, J. \emph{On the language inclusion problem for timed automata: closing a decidability gap}, Proceedings of the 19th Annual IEEE Symposium on Logic in Computer Science, 2004., pp. 54-63, Turku, Finland,
\doi{10.1109/LICS.2004.1319600}.

\bibitem{ref_asarin}
Eugene Asarin. \emph{Challenges in Timed Languages: From Applied Theory to Basic Theory}. Bulletin-European Association for Theoretical Computer Science; 1999, 2004, 83, pp. 106-120. $<hal-00157685>$.

\bibitem{ref_finkel}
Olivier Finkel. \emph{Undecidable Problems About Timed Automata}. Eugene Asarin and Patricia Bouyer. FORMATS’06, 2006, France. Springer, pp.187-199, 2006, Lecture Notes in Computer Science, Volume 4202. $<hal-00121529v2>$.

\bibitem{ref_lygeros}
Lygeros, J., Tomlin, C., Sastry, S.: \emph{Hybrid Systems: Modeling, Analysis and Control}. (2008).

\bibitem{ref_krishna_trivedi_hybrid}
Krishna, S.N. and Trivedi, A., \emph{Hybrid Automata for formal modeling and verification of Cyber-Physical Systems} 2013, Journal of the Indian Institute of Science, vol. 93:3, pp. 419--440.

\bibitem{ref_alur_hybrid}
R. Alur, C. Courcoubetis, N. Halbwachs, T.A. Henzinger, P.-H. Ho, X. Nicollin, A. Olivero, J. Sifakis, and S. Yovine, \emph{The algorithmic analysis of hybrid systems}. TCS, 138(1), 1995.

\bibitem{ref_krishna_manasa}
Krishna, S.N., Manasa, L. and Trivedi, A., \emph{Improved Undecidability Results for Reachability Games on Recursive Timed Automata}. 2014, Electronic Proceedings in Theoretical Computer Science, vol. 161, pp. 245--259.

\bibitem{ref_abraham}
Abraham, Erika. \emph{Modeling and Analysis of Hybrid Systems Lecture Notes}. RWTH Aachen University (2012).

\bibitem{ref_alur_book}
Alur, Rajeev., \emph{Principles of Cyber-Physical Systems}, 1st edn. MIT Press, Cambridge, MA (2015).

\bibitem{ref_jiang}
Jiang, Z., M. Pajic , A. Connolly , S. Dixit and R. Mangharam. \emph{Real-time Heart Model for Implantable Cardiac Device Validation and Verification}. Presented at 22nd Euromicro Conference on Real-Time Systems (ECRTS10), 2010, Brussels, Belgium.

\bibitem{uppaal_tutorial}
Behrmann, G., David, A. and Larsen, K. G. \emph{A Tutorial on Uppaal}. Formal Methods for the Design of Real-Time Systems. Springer, pp.200-236, 2004.

\bibitem{ref_bak}
Bak, Stanley, Tran, H-D., Johnson, Taylor T., \emph{Numerical Verification of Affine Systems with up to a Billion Dimensions}. 2018. Accessed from \url{https://arxiv.org/abs/1804.01583}.

\bibitem{ref_schupp}
Schupp, S., {\'A}brah{\'a}m, Erika, Chen, Xin, Ben Makhlouf, Ibtissem, Frehse, Goran, Sankaranarayanan, S., Kowalewski, S., \emph{Current Challenges in Verification of Hybrid Systems}. 2015, In International Workshop on Design, Modeling, and Evaluation of Cyber Physical Systems, pp.8--24, Springer, Cham.

\end{thebibliography}
%

\end{document}